\newcommand\algoIndent[1]{%
  \linebreak\hspace*{\dimexpr\algorithmicindent*#1}%
}
\newcommand{\imm}{\mathrm{im}}
\newcommand{\len}{\textsf{len}}
\newcommand{\rank}{\mathrm{rank}}
\newtheorem{theorem}{Theorem}[section]
\newtheorem{lemma}[theorem]{Lemma}
\theoremstyle{definition}
\newtheorem{definition}{Definition}[section]
\newtheorem{example}{Example}[section]
\title[Computing eulerian magnitude homology] {Computing eulerian magnitude homology}
\author{Giuliamaria Menara}
\address[GM]{Department of Mathematics, Informatics and Geosciences, University of Trieste, Trieste, 34127, Italy}
\email{giuliamaria.menara@phd.units.it}
\author{Luca Manzoni}
\address[LM]{Department of Mathematics, Informatics and Geosciences, University of Trieste, Trieste, 34127, Italy}
\email{lmanzoni@units.it}
\keywords{Computational Geometry, Graph Theory, Algebraic Topology, Eulerian Magnitude Homology, Algorithm}
\begin{document}
\maketitle

\begin{abstract}
    In this paper tackle the problem of computing the ranks of certain eulerian magnitude homology groups of a graph $G$. First, we analyze the computational cost of our problem and prove that it is $\#W[1]$-complete. Then we develop the \emph{first diagonal algorithm}, a breadth-first-search-based algorithm parameterized by the diameter of the graph to calculate the ranks of the homology groups of interest. To do this, we leverage the close relationship between the combinatorics of the homology boundary map and the substructures appearing in the graph. We then discuss the feasibility of the presented algorithm and consider future perspectives.
\end{abstract}

\section{Introduction}
\label{sec:intro}

Graphs are used in a large variety of fields to model and analyze complex relationships. 
In this framework, the search for interesting and relevant substructures is a standard procedure, and the detection of cliques and clique-like subgraphs is a fundamental tool in graph analysis.
Such substructures have been applied in many different situations including community detection in social networks~\cite{kumar1999trawling,sozio2010community}, identification of real-time stories in the news~\cite{angel2014dense} and graph visualization~\cite{zhang2012extracting,zhao2012large}.
In practice, due to noise in data, one is also interested in large ``near-cliques''. 
While this is not a standard term, applications involve cliques that lack a small sparse subgraph.
For example, incomplete cliques have been used to predict missing pairwise interactions~\cite{zhang2012extracting} and for identifying functional groups~\cite{han2007identifying} in a protein-protein interaction network. 
Also, they were exploited for community detection~\cite{zhu2020community} and for detecting test collusion~\cite{belov2021graph}.
Recent works have used the fraction of near-cliques to $k$-cliques to define higher order variants of clustering coefficients~\cite{yin2017local}.

In the present work, in order to quantitatively characterize these structures, we propose an efficient algorithm to compute eulerian magnitude homology, a tool that comes from the field of algebraic topology.
Magnitude is an isometric invariant of metric spaces, so-named for its web of connections to ``size-like'' quantities of significance in various corners of mathematics.
Defined and first studied by Leinster~\cite{leinster2013magnitude}, it is a special case of a general theory of magnitude of an enriched category, and has found applications in areas like biodiversity (e.g., Leinster and Cobbold~\cite{leinster2012measuring}).
As a finite graph naturally gives rise to a finite metric space, it is possible to associate magnitude with it.
Magnitude homology has been invented by Hepworth and Willerton~\cite{hepworth2015categorifying} as an enrichment of the magnitude of a graph which is equipped with a graph metric.
The magnitude homology of graphs has been well studied in recent years and has proven to be a rich invariant~\cite{gu2018graph,hepworth2022magnitude,hepworth2015categorifying,kaneta2021magnitude,sazdanovic2021torsion}.
Also, several theoretical tools for computing the magnitude homology of a graph have been studied so far. 
For example, Hepworth and Willerton proved in~\cite{hepworth2015categorifying} a Mayer-Vietoris type exact sequence and a Kunneth type formula, and Gu~\cite{gu2018graph} uses algebraic Morse theory for computation for some graphs. 
\emph{Eulerian magnitude homology} is a variant of magnitude homology defined by Giusti and Menara in~\cite{giusti2024eulerian} with the specific purpose of exploiting the relation between magnitude homology and graph substructures.
Although the authors were able to provide a meaningful interpretation of the first diagonal eulerian magnitude homology groups of a graph $G$ (that is, groups $EMH_{k,\ell}(G)$ with $\ell = k$), a major drawback is represented by the fact that computing eulerian magnitude homology requires the enumeration of $k$-simple paths of $G$ and, as the number of vertices $n$ grows, the number $K$ of $(k+1)$-tuples inducing a path of length $k$ becomes exponential in the number $k+1$ of required vertices (i.e. $K \sim n^{k+1}$).

It was shown in \cite{giusti2024eulerian} that searching for all $(k+1)$-tuples inducing a path of length $k$ and generating homology is the same as looking for specific substructures in the graph $G$.
In other words, computing the rank of eulerian magnitude homology groups $EMH_{k,k}(G)$ is the same of enumerating subgraphs of $G$ belonging to a certain family $\mathcal{H}$.
This problem is called the \emph{subgraph isomorphism problem} and there is an extensive literature studying its computational complexity.
For example, it is shown in~\cite{hell1990complexity} that for any fixed simple graph $G$, the problem of whether there exists an isomorphism from another graph $H$ to $G$ is solvable in polynomial time if $G$ is bipartite, and \NP-complete if $G$ is not bipartite.
Also, the work by Dyer and Greenhill~\cite{dyer2000complexity} proves that polynomial-time solvable cases arise only when $G$ is an isolated vertex, a complete graph with all loops present, a complete bipartite graph without loops, or a disjoint union of these graphs.
Moreover, in~\cite{amini2012counting} Amini, Fomin and Saurabh relate counting subgraphs to counting graph isomorphisms.
They provide exact algorithms for several problems (compute the number of optimal bandwidth permutations of a graph on $n$ vertices excluding a fixed graph as a minor, counting all subgraphs excluding a fixed graph $M$ as a minor, counting all subtrees with a given maximum degree) and all of them can be solved in time at least $2^{\mathcal{O}(n)}$.

In this paper, we will prove the intrinsic difficulty of our problem by showing that it is complete for the $\#W[1]$ complexity class.
Then, we tackle this computational problem and propose a breadth-first-search-based approach to compute the eulerian magnitude homology groups $EMH_{k,k}(G)$, which results in an algorithm that is more computationally efficient than relying directly on the definition.
Indeed, even if in the worst-case scenario we still have exponential computational complexity, for many graphs emerging from real-world scenarios the complexity is sub-exponential or even polynomial, as we will show in Section \ref{subsec:discussion}.

\subsection{Outline}
The paper is organized as follows.
We start by recalling in Section \ref{sec:background} some general background about graphs and eulerian magnitude homology.
Then in Section \ref{sec:EMH computational cost} we investigate the complexity of computing eulerian magnitude homology groups and we tie this problem to the subgraph isomorphism problem.
In Section \ref{sec:algorithm} we develop and thoroughly describe an algorithm to compute first diagonal eulerian magnitude homology groups of a graph $G$. 

\subsection{List of abbreviations}
The following abbreviations will be used throughout the paper.

\begin{itemize}
    \item Breadth First Search: BFS
    \item Eulerian magnitude chain: EMC
    \item Eulerian magnitude homology: EMH
    \item First Diagonal Algorithm: FDA
    \item Magnitude chain: MC
    \item Magnitude homology: MH
\end{itemize}

\section{Background}
\label{sec:background}

We begin by recalling relevant definitions and results.
We assume readers are familiar with the general theory of simplicial homology (for a thorough exposition see~\cite{hatcher2005algebraic}).
Throughout the paper we adopt the notation $[m] = \{1, \dots, m\}$ and $[m]_0 = \{0, \dots, m\}$ for common indexing sets.

\subsection{Graph terminology and notation}
\label{subsec:graph definitions}

An undirected graph is a pair $G=(V,E)$ where $V$ is a set of vertices and $E$ is a set of edges (unordered pairs of vertices).
Recall that a \emph{walk} in $G$ is an ordered sequence of vertices $x_0,x_1,\ldots,x_k \in V$ such that there is an edge $\{x_i,x_{i+1}\}$ for all $i \in [k]_0$.
We call a \emph{path} a walk with no repeated vertices.
For the purposes of introducing eulerian magnitude homology, we assume all graphs to have no self-loops and no multiedges~\cite{leinster2019magnitude}.
It is possible to consider the set of vertices of a graph as an extended metric space (i.e. a metric space with infinity allowed as a distance) by taking the \emph{hop  distance} $d(u,v)$ to be equal to the length of a shortest path in $G$ from $u$ to $v$, if such a path exists, and taking $d(u,v) = \infty$ if $u$ and $v$ lie in disconnected components of $G$.

\begin{definition}
\label{def:ktrail}
Let $G = (V,E)$ be a graph, and $k$ a non-negative integer. A \emph{$k$-trail} $\bar{x}$ in $G$ is a $(k+1)$-tuple $(x_0,\dots,x_k) \in V^{k+1}$ of vertices for which $x_i \neq x_{i+1}$ and $d(x_i,x_{i+1})<\infty$ for every $i \in [k-1]_0$.
The \emph{length} of a $k$-trail $(x_0,\dots,x_k)$ in $G$ is defined as the minimum length of a walk that visits $x_0,x_1,\ldots,x_k$ in this order:
\[
\len(x_0,\dots,x_k) = d(x_0,x_1)+\cdots + d(x_{k-1},x_k).
\]
We call the vertices $x_0, \dots x_{k}$ the \emph{landmarks}, $x_0$ the \emph{starting point}, and $x_k$ the \emph{ending point} of the $k$-trail.
\end{definition}

\subsection{Eulerian magnitude homology}
\label{subsec:EMH definitions}

The \emph{magnitude homology} of a graph $G$ was first introduced by Hepworth and Willerton in~\cite{hepworth2015categorifying}, and the \emph{eulerian} magnitude homology of a graph is a variant of it with a stronger connection to the subgraph structures of $G$.
Specifically, while the building blocks of standard magnitude homology are tuples of vertices $(x_0,\dots,x_k)$ where we ask that \emph{consecutive} vertices are different, eulerian magnitude homology is defined starting from tuples of vertices $(x_0,\dots,x_k)$ where we ask that \emph{all} landmarks are different.

Eulerian magnitude homology was recently introduced by Giusti and Menara in~\cite{giusti2024eulerian} and we recall here the construction.

\begin{definition}(Eulerian magnitude chain)
\label{def:magchain}
	Let $G=(V,E)$ be a graph.
	We define the $(k,\ell)$-eulerian magnitude chain $EMC_{k,\ell}(G)$ to be the free abelian group generated by trails $(x_0,\dots,x_k) \in V^{k+1}$ such that $x_i \neq x_j$ for every $0\leq i,j \leq k$ and $\len(x_0,\dots,x_k)=\ell$.
\end{definition}

It is straightforward to demonstrate that the eulerian magnitude chain is trivial when the length of the path is too short to support the necessary landmarks.

\begin{lemma}[{c.f. \cite[Proposition 10]{hepworth2015categorifying}}]
\label{lem:LowerTriangular}
Let $G$ be a graph, and $k > \ell$ non-negative integers. Then $EMC_{k,\ell}(G) \cong 0.$
\end{lemma}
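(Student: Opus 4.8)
The plan is to show that when $k > \ell$ the set of generators of $EMC_{k,\ell}(G)$ is empty; since a free abelian group on the empty set is the trivial group, this yields $EMC_{k,\ell}(G) \cong 0$ at once. The entire argument rests on a lower bound for the length of a generating trail in terms of its number of landmarks.

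First I would observe that every generator $(x_0,\dots,x_k)$ satisfies $x_i \neq x_{i+1}$ for all $i \in [k-1]_0$, either from the $k$-trail condition of Definition~\ref{def:ktrail} or \emph{a fortiori} from the all-landmarks-distinct requirement of Definition~\ref{def:magchain}. In a graph equipped with the hop distance, distinct vertices are at distance at least $1$, so $d(x_i,x_{i+1}) \geq 1$ for each such $i$. Summing over the $k$ consecutive pairs then gives
\[
\len(x_0,\dots,x_k) = \sum_{i=0}^{k-1} d(x_i,x_{i+1}) \geq \sum_{i=0}^{k-1} 1 = k.
\]
Hence every generator has length at least $k$, so none can have length exactly $\ell < k$, and the generating set is empty, as claimed.

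As for difficulty, there is essentially no obstacle here: the statement is a direct consequence of the fact that each pair of consecutive distinct landmarks contributes at least $1$ to the length. The only subtlety worth flagging is that the bound already follows from the \emph{consecutive}-distinctness built into the notion of a $k$-trail, so the stronger eulerian hypothesis is not actually needed for this vanishing result — a useful sanity check, since it shows the same triviality holds verbatim for the ordinary magnitude chain groups of~\cite{hepworth2015categorifying}.
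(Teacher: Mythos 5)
Your proof is correct and follows essentially the same argument as the paper: each consecutive pair of distinct landmarks contributes at least $1$ to the length, so $\len(x_0,\dots,x_k)\geq k$ and no generator can have length $\ell<k$. The paper phrases this as a proof by contradiction rather than directly showing the generating set is empty, but the substance is identical; your closing observation that consecutive-distinctness already suffices is accurate and consistent with the paper's attribution of the result to the ordinary magnitude chain setting of Hepworth--Willerton.
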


\begin{proof}
    Suppose $EMC_{k,\ell}(G)\neq 0.$
    Then, there must exist a $k$-trail $(x_0,\dots,x_k)$ in $G$ so that $\len(x_0,\dots,x_k)=d(x_0,x_1)+\cdots+d(x_{k-1},x_k) = \ell$.
	However, as all vertices in the $k$-trail must be distinct, $d(x_i,x_{i+1}) \geq 1$ for $i \in [k-1]_0$, so $k$ can be at most $\ell$.
\end{proof}

\begin{definition}(Differential)
	\label{def:differential}
	Denote by $(x_0,\dots,\hat{x_i},\dots,x_k)$ the $k$-tuple obtained by removing the $i$-th vertex from the $(k+1)$-tuple $(x_0,\dots,x_k)$.  We define the \emph{differential}
	\[
	\partial_{k,\ell}: EMC_{k,\ell}(G) \to EMC_{k-1,\ell}(G)
	\]
	to be the signed sum $\partial_{k,\ell}= \sum_{i\in [k-1]}(-1)^{i}\partial_{k,\ell}^i$ of chains corresponding to omitting landmarks without shortening the walk or changing its starting or ending points,
	\[
	\partial_{k,\ell}^i(x_0,\dots,x_k) = \begin{cases}
		(x_0,\dots,\hat{x_i},\dots,x_k) , &\text{ if } \len(x_0,\dots,\hat{x_i},\dots,x_k) = \ell, \\
		0, &\text{ otherwise.}\\
	\end{cases}
	\]
\end{definition}

For a non-negative integer $\ell$, we obtain the \emph{eulerian magnitude chain complex}, $EMC_{*,\ell}(G),$ given by the following sequence of free abelian groups and differentials.

\begin{definition}(Eulerian magnitude chain complex)
\sloppy
We indicate as $EMC_{*,\ell}(G)$ the following sequence of free abelian groups connected by differentials
	\[
	\cdots \rightarrow EMC_{k+1,\ell}(G) \xrightarrow{\partial_{k+1,\ell}} EMC_{k,\ell}(G) \xrightarrow{\partial_{k,\ell}} EMC_{k-1,\ell}(G) \to \cdots
	\]
\end{definition}

The differential map used here is the one induced by standard magnitude, and it is shown in \cite[Lemma 11]{hepworth2015categorifying} that the composition $\partial_{k,\ell} \circ \partial_{k+1,\ell}$ vanishes, justifying the name ``differential" and allowing the definition the corresponding bigraded homology groups of a graph.

\begin{definition}(Eulerian magnitude homology)
	\label{def_EMH}
	The $(k,\ell)$-eulerian magnitude homology group of a graph $G$ is defined by
	\[
	EMH_{k,l}(G) = H_k(EMC_{*,l}(G)) = \frac{\ker(\partial_{k,\ell})}{\imm(\partial_{k+1,\ell})}.
	\]
\end{definition}

\begin{example}
\label{ex:toyexampleEMH}
\sloppy
Consider the graph $G$ in Figure \ref{fig:toyexampleEMH}.
We will compute $EMH_{2,2}(G)$.
$EMC_{2,2}(G)$ is generated by the $2$-paths in $G$ of length $2$.
There are ten such paths, consisting of all possible walks of length two in the graph visiting different landmarks: (0,1,2), (0,2,1), (0,2,3), (1,0,2), (1,2,0), (1,2,3), (2,0,1), (2,1,0), (3,2,0), (3,2,1).
Similarly, $EMC_{1,2}(G)$ is generated by the four $1$-paths in $G$ of length $2$: (0,3), (1,3), (3,0), (3,1).
In this case the differential $\partial_{2,2}$ only omits the center vertex.
Thus it is straightforward to check that it is surjective, and that the kernel is generated by the $6$ elements whose length diminishes when the middle vertex is removed; that is, all elements in $EMC_{2,2}(G)$ except (0,2,3), (1,2,3), (3,2,0), (3,2,1).  
On the other hand, by Lemma \ref{lem:LowerTriangular}, $EMC_{3,2}(G)$ is the trivial group, and thus the image of $\partial_{3,2}$ is $\langle 0 \rangle$. Therefore $\rank(EMH_{2,2}(G))=6$ and $EMH_{2,2}(G)$ is generated by those walks between vertices 0, 1, and 2 of the triangle, recording the fact that there is a shorter path between the starting and ending points of the walks.
\end{example}

\begin{figure}[ht]	
\centering
\begin{tikzpicture}[node distance={15mm}, thick, main/.style = {draw, circle}]
    \node[main] (0) {$0$}; 
    \node[main] (1) [right of=0] {$1$};  
    \node[main] (2) [above of=1] {$2$};
    \node[main] (3) [right of=1] {$3$};
    \draw (0) -- (1);
    \draw (1) -- (2);
    \draw (0) -- (2);
    \draw (2) -- (3);
\end{tikzpicture} 
\caption{Graph $G$}
\label{fig:toyexampleEMH}
\end{figure}

\section{Eulerian magnitude homology computational cost}
\label{sec:EMH computational cost}

Relying on Definition \ref{def_EMH} to compute eulerian magnitude homology results in a computationally unfeasible task.
Indeed, consider for example the $(k,k)$-EMH group.
The sole construction of the eulerian magnitude chain $EMC_{k,k}(G)$ requires $k!$ checks for every possible sequence of $(k+1)$ vertices, for a total of $\binom{n}{k+1} k! \sim n^{k+1}$ checks.
Therefore, in the worst-case scenario of a complete graph $G=K_{n+1}$, the last computable chain $EMC_{n,n}(G)$ requires $n^{n+1}$ computations.

One possible approach to this issue is turn it into a subgraph counting problem.
Indeed, it was proven in \cite{giusti2024eulerian} (although using a different language) that homology cycles in the eulerian magnitude chain complex can be decomposed into cycles supported on specific subgraphs $H_i$ belonging to a family $\mathcal{H}=\{H_i\}_i$. 
In what follows, we recall a useful result from \cite{giusti2024eulerian} and then we proceed by explicitly exhibiting the graphs family $\mathcal{H}$.

\begin{lemma}[\cite{giusti2024eulerian}]
    \label{lem:diff_zero_so_edge}
    Let $G = (V, E)$ be a graph and take $k \geq 2.$ Fix some $i \in [k-1]$ and $\bar{x} = (x_0, x_1, \dots, x_k)\in EMC_{k,k}(G).$
    Then $\partial^i_{k,\ell}(\bar{x}) = 0$ if and only if $\{x_{i-1}, x_{i+1}\}\in E.$     
\end{lemma}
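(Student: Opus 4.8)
The plan is to unwind the definitions and reduce the whole statement to a short computation with the hop distance $d(x_{i-1},x_{i+1})$. The starting observation is that membership $\bar{x}\in EMC_{k,k}(G)$ is extremely rigid. Since the landmarks are pairwise distinct we have $d(x_j,x_{j+1})\geq 1$ for every $j\in[k-1]_0$, while the defining length constraint reads
\[
\len(\bar{x}) = d(x_0,x_1)+\cdots+d(x_{k-1},x_k) = k,
\]
a sum of exactly $k$ terms, each at least $1$. Hence every summand must equal $1$, i.e. $\{x_j,x_{j+1}\}\in E$ for all $j$; in particular $\bar{x}$ is a genuine path in $G$ with adjacent consecutive landmarks. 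This is essentially the content of Lemma \ref{lem:LowerTriangular} read in the boundary case $\ell=k$.

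Next I would compute the length of the reduced tuple. Deleting the interior landmark $x_i$ (legitimate since $i\in[k-1]$, so $x_{i-1}$ and $x_{i+1}$ both exist and the endpoints $x_0,x_k$ are untouched) replaces the two unit contributions $d(x_{i-1},x_i)+d(x_i,x_{i+1})=2$ by the single term $d(x_{i-1},x_{i+1})$, leaving all other summands unchanged. Therefore
\[
\len(x_0,\dots,\hat{x_i},\dots,x_k) = k - 2 + d(x_{i-1},x_{i+1}).
\]
By Definition \ref{def:differential}, with $\ell=k$, we have $\partial^i_{k,\ell}(\bar{x})=0$ exactly when this quantity differs from $k$, that is, exactly when $d(x_{i-1},x_{i+1})\neq 2$.

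Finally I would pin down the possible values of $d(x_{i-1},x_{i+1})$. The sub-walk $x_{i-1},x_i,x_{i+1}$ exhibits a walk of length $2$, so $d(x_{i-1},x_{i+1})\leq 2$, and since $x_{i-1}\neq x_{i+1}$ by distinctness of the landmarks, $d(x_{i-1},x_{i+1})\geq 1$. Hence $d(x_{i-1},x_{i+1})\in\{1,2\}$, so the condition $d(x_{i-1},x_{i+1})\neq 2$ is equivalent to $d(x_{i-1},x_{i+1})=1$, which is in turn equivalent to $\{x_{i-1},x_{i+1}\}\in E$. Chaining this with the previous paragraph yields the desired equivalence. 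I do not expect any genuine obstacle here; the only step that deserves care is the opening remark that the length-$k$ constraint forces all consecutive distances to equal $1$, since everything afterward is a one-line distance computation resting on that fact.
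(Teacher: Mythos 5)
Your proof is correct and follows essentially the same route as the paper's: both arguments first note that the length constraint $\len(\bar{x})=k$ with distinct landmarks forces $d(x_j,x_{j+1})=1$ for all $j$, and then observe that $\partial^i_{k,k}(\bar{x})$ vanishes precisely when deleting $x_i$ shortens the walk, i.e.\ when $d(x_{i-1},x_{i+1})=1$. Your write-up is in fact slightly more explicit than the paper's, since you compute $\len(x_0,\dots,\hat{x_i},\dots,x_k)=k-2+d(x_{i-1},x_{i+1})$ and pin down $d(x_{i-1},x_{i+1})\in\{1,2\}$, but there is no substantive difference in approach.
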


\begin{proof}
    Let $G = (V, E)$ be a graph.
    Take $k \geq 2,$ $i \in [k-1],$ and $\bar{x} = (x_0, x_1, \dots, x_k)\in EMC_{k,k}(G).$
    As $\len((x_0, \dots, x_k)) = k$ and each vertex is distinct, we have $d(x_j, x_{j+1}) = 1,$ so $\{x_j, x_{j+1}\}\in E$ for each $j \in [k-1]_0.$
    As $\partial^i_{k,k}(\bar{x}) \in EMC_{k-1,k}(G),$ if $\partial^i_{k,k}(\bar{x}) = 0,$ then $\len(\partial^i_{k,k}(\bar{x})) < \len(\bar{x}) = k,$ which can only occur if removing the landmark $x_i$ shortens the walk, which occurs precisely when $\{x_{i-1}, x_{i+1}\} \in E$.
    Thus the graph $H(\bar{x})=\left(V_{\bar{x}},E_{\bar{x}} \right)$ induced by a single homology generator is defined as
    
    \begin{align*}
        V_{\bar{x}} &= V(\bar{x}) \\
        E_{\bar{x}} &= \{\{x_i, x_{i+1}\}\}_{i\in[k-1]_0} \cup \{\{x_{i-1}, x_{i+1}\}\}_{i\in[k-1]}\,
    \end{align*}
    
    see Figure \ref{fig:single-tuple-subgraph-emh}.
\end{proof}

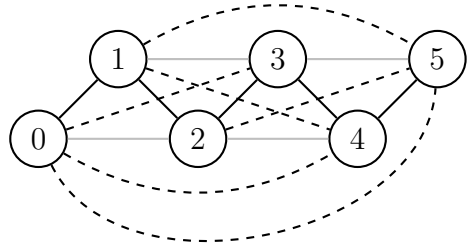
\begin{figure}[ht]
    \centering
    \begin{tikzpicture}[node distance={15mm}, thick, main/.style = {draw, circle,minimum size=.75cm}]
            \node[main] (1) {$1$};
			\node[main] (0) [below left of=1] {$0$};  
			\node[main] (2) [below right of=1] {$2$};
			\node[main] (3) [above right of=2] {$3$};
			\node[main] (4) [below right of=3] {$4$};
			\node[main] (5) [above right of=4] {$5$};
			\draw (0) -- (1);
			\draw (1) -- (2);
			\draw (2) -- (3);
			\draw (3) -- (4);
			\draw (4) -- (5);
			
			\draw[lightgray] (0) -- (2);
			\draw[lightgray] (1) -- (3);
			\draw[lightgray] (2) -- (4);
			\draw[lightgray] (3) -- (5);

			\draw[dashed] (0) -- (3);
			\draw[dashed] (1) -- (4);
            \draw[dashed] (2) -- (5);

           \path[every node/.style={font=\sffamily\small}]
			(0) edge [bend right, dashed] node {} (4);
            \path[every node/.style={font=\sffamily\small}]
			(0) edge [bend right=75, dashed] node {} (5);
            \path[every node/.style={font=\sffamily\small}]
			(1) edge [bend left, dashed] node {} (5);
		\end{tikzpicture}
		\caption{Subgraph $H(\bar{x})$ induced by $[\bar{x}]=[0,1,2,3,4,5] \in EMH_{5,5}(G)$. The edges in the path $(0,1,2,3,4,5)$ are represented in black. Since the removal of each vertex causes the length of the induced path to decrease, it means all grey edges $\{x_{i-1},x_{i+1}\}$ are contained in the induced graph. The dashed edges do not play a role in the homology computation.}
		\label{fig:single-tuple-subgraph-emh}   
\end{figure}

Generalizing Lemma \ref{lem:diff_zero_so_edge} to the situation where a linear combination of tuples $\sum_i a_i \bar{x}_i$ is an eulerian magnitude homology cycle becomes difficult because of the increasingly complicated collection of isomorphism types of graphs which can support eulerian trails.

A first result in this direction is the characterization of the subgraph $H$ induced by two tuples $\bar{x}^1$ and $\bar{x}^2$ such that $\partial_{k,k}(\bar{x}^1-\bar{x}^2)=0.$
We describe this case in the following example.

\begin{example}
    \label{ex:two_trail_cycle_graph}
    Let $G=(V, E)$ be a graph, and let $\bar{x}^i = (x_0^i, \dots, x_k^i), i=1, 2,$ be trails in $EMC_{k,k}(G).$
    We want to construct a graph $H$ induced by the vertex set $V(\bar{x}^1) \cup V(\bar{x}^2)$ such that the difference $(\bar{x}^1-\bar{x}^2)$ is a non-trivial generator of $EMH_{k,k}(G)$, i.e. $\bar{x}^1$ and $\bar{x}^2$ are generators of $EMC_{k,k}(G)$ for which  $\partial_{k,k}(\bar{x}^i) \neq 0, i=1, 2$ but $\partial_{k,k}(\bar{x}^1 - \bar{x}^2) = 0.$
    So, say there is one landmark $\bar{x}^i_r$ such that $\partial_{k,k}(\bar{x}^i)=(-1)^r\partial_{k,k}^r(\bar{x}^i) \neq 0$. 
    
    From the definition of the differential, if $\bar{x}^1 \neq \bar{x}^2$ and $\partial_{k,k}(\bar{x}^1 - \bar{x}^2) = 0$ then both trails agree in all landmarks except one, say $x_r^1 \neq x_r^2$ for some $r \in [k-1],$ and the vertex $x_r^2$ cannot appear as a landmark in $\bar{x}^1$ nor vice versa.
    Indeed, suppose they differ in two landmarks, say $x_r^1 \neq x_r^2$ and $x_s^1 \neq x_s^2$ for some $r,s \in [k-1]$, and say $r<s$.
    Then if we compute the boundary $\partial_{k,k}(\bar{x}^1 - \bar{x}^2)$ we get
    \begin{align*}
        \partial_{k,k}(\bar{x}^1 - \bar{x}^2) &= \sum_{i=1}^{k-1} (-1)^i \partial_{k,k}^i(\bar{x}^1 - \bar{x}^2) \\
        &= (-1)^r \partial_{k,k}^r(\bar{x}^1 - \bar{x}^2) + (-1)^s \partial_{k,k}^s(\bar{x}^1 - \bar{x}^2) \\
        &= (-1)^r ((x_0^1,\dots, \hat{x}_r^1, \dots,x_s^1, \dots, x_k^1 ) - (x_0^2,\dots, \hat{x}_r^2, \dots,x_s^2, \dots, x_k^2 )) \\
        &+ (-1)^r ((x_0^1,\dots, x_r^1, \dots,\hat{x}_s^1, \dots, x_k^1 ) - (x_0^2,\dots, x_r^2, \dots,\hat{x}_s^2, \dots, x_k^2 )) \\
        &\neq 0,
    \end{align*}

    because we assumed $x_r^1 \neq x_r^2$ and $x_s^1 \neq x_s^2$, and thus the subtuples do not simplify.
    In particular, $x_0^1 = x_0^2$ and $x_k^1 = x_k^2,$ so the trails have the same starting and ending points.
    
    Let $H(\bar{x}^1) = (V^{\{1\}}, E^{\{1\}})$ be the graph of Lemma \ref{lem:diff_zero_so_edge} represented in Figure \ref{fig:single-tuple-subgraph-emh}.
    The set $E^{\{1\}}$ necessarily contains all of the edges in the support of both trails except $\{x_{r-1}^1, x_r^2\}$ and $\{x_r^2, x_{r+1}^1.\}$
    Further, from the proof of Lemma \ref{lem:diff_zero_so_edge} we know that the edges already present in $H(\bar{x}^1)$ imply $\partial_{k,k}^i(\bar{x}^1) = 0, i \neq r,$ and $\partial_{k,k}^i(\bar{x}^2) = 0$ for $i \neq r-1, r, r+1,$ because the two trails are equal away from these vertices.
    However, we must introduce new edges to ensure $\partial_{k,k}^{r-1}(\bar{x}^2) = \partial_{k,k}^{r+1}(\bar{x}^2) = 0.$
    So, we define $H = (V_H, E_H),$ where 
    \begin{align*}
    V_H = &V^{\{1\}} \cup \{x_r^2\}\\
    E_H = &\left(E^{\{1\}} \cup \{\{x_{a}^1, x_r^2\} \;\colon\; a \in \{r-2, r-1, r+1, r+2\} \cap [k]_0\}\right) \\
    &\setminus \{ \{x_{r-1}^1, x_{r+1}^1\}\}.
     \end{align*}
   
    See Figure \ref{fig:subgraph emh m-combination} for an illustration.
    The new vertex $x_r^2$ and the two new edges $\{x_{r-1}^1, x_r^2\}$ and $\{x_{r+1}^1, x_r^2\}$ support the trail $\bar{x}^2$ and imply the agreement of $\partial_{k,k}^r$ on the two generators.
    The other one or two new edges are diagonals in newly introduced subgraphs isomorphic to $C_4,$ and so are added to enforce that $\partial_{k,k}^{r-1}(\bar{x}^i)=0$ and $\partial_{k,k}^{r+1}(\bar{x}^i)=0,$ as needed. Finally, by Lemma \ref{lem:diff_zero_so_edge}, the edge $\{x_{r-1}, x_{r+1}\}$ must be absent from $G$ to ensure that $\partial_{k,k}^r(\bar{x}^1) = \partial_{k,k}^r(\bar{x}^2) \neq 0$.
    Finally, from Lemma \ref{lem:diff_zero_so_edge} all other terms in the differential of both chains are zero due to edges in $E^{\{1\}}.$ 
\end{example}

    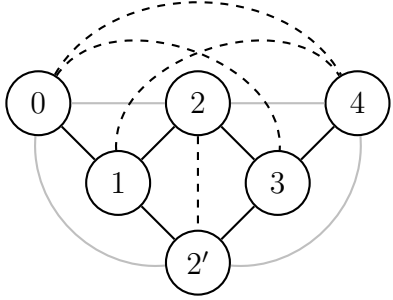
\begin{figure}[ht]
		\centering
        
		\begin{tikzpicture}[node distance={15mm}, thick, main/.style = {draw, circle,minimum size=.85cm}]
			\node[main] (1) {$0$};  
		\node[main] (2) [below right of=1] {$1$};
		\node[main] (3) [above right of=2] {$2$};
		\node[main] (4) [below right of=3] {$3$};
		\node[main] (5) [above right of=4] {$4$};
		\node[main] (6) [below right of=2] {$2'$};

		\draw (1) -- (2);
		\draw (2) -- (3);
		\draw (3) -- (4);
        \draw (4) -- (5);
			
		\draw[black] (4) -- (6);
		\draw[black] (2) -- (6);
		\draw[lightgray] (1) -- (3);
		\draw[lightgray] (3) -- (5);
		\draw[dashed] (3) -- (6);
			
		\path[every node/.style={font=\sffamily\small}]
			(1) edge [bend right=50, lightgray] node {} (6);
	    \path[every node/.style={font=\sffamily\small}]	
            (6) edge [bend right=50, lightgray] node {} (5);
        \path[every node/.style={font=\sffamily\small}]
			(2) edge [bend left=75, dashed] node {} (5);
	    \path[every node/.style={font=\sffamily\small}]
			(1) edge [bend left=75, dashed] node {} (4);
	    \path[every node/.style={font=\sffamily\small}]
			(1) edge [bend left=60, dashed] node {} (5);
			
		\end{tikzpicture} 
        \vspace*{1mm}
		\caption{Subgraph $H$ induced by the $(4,4)$-$EMH$ cycle $[0,1,2,3,4]-[0,1,2',3,4]$. In this case, the edge $(1,3)$ cannot be present in order to have $\partial_{4,4}(0,1,2,3,4) \neq 0$ and $\partial_{4,4}(0,1,2',3,4) \neq 0$, but $\partial_{4,4}((0,1,2,3,4)-(0,1,2',3,4)) = 0$. The black edges are in the support of the two paths, the grey edges needed to be added for the differential to vanish, and the dashed edges do not play a role in the homology computation.}
		\label{fig:subgraph emh m-combination}
	\end{figure}

It is possible to generalize the construction presented in Example \ref{ex:two_trail_cycle_graph} to the point where we remove all edges $\{x_{i-1},x_{i+1}\}$, and in this case the graph $H=(V_H,E_H)$ would be defined as
\begin{align*}
    V_H = &V^{\{1\}} \cup \{x_r' \colon r \in [k-1] \} \\
    E_H = &\left(E^{\{1\}} \cup \{\{x_{r-1}^1,x_r'\}, \{x_r',x_{r+1}^1\}: r \in [k-1] \} \cup \{x_{r}',x_{r+1}' \colon r \in [k-2]\} \right) \\
    & \setminus \{ \{x_{r-1}^1, x_{r+1}^1\} \colon r \in [k-1] \},
\end{align*}
and we see that $H$ is homomorphic to a grid graph of dimension $(k-2) \times (k-2)$. See Figure \ref{fig:subgraph emh max combination} for an illustration.

\begin{figure}[ht]
		\centering
        
		\begin{tikzpicture}[node distance={15mm}, thick, main/.style = {draw, circle,minimum size=.85cm}]
		\node[main] (0) {$0$};  
		\node[main] (1) [below right of=0] {$1$};
        \node[main] (1') [above right of=0] {$1'$};
        \node[main] (2) [above right of=1] {$2$};
        \node[main] (2') [below right of=1] {$2'$};
		\node[main] (3) [below right of=2] {$3$};
        \node[main] (3') [above right of=2] {$3'$};
		\node[main] (4) [above right of=3] {$4$};

		\draw (0) -- (1);
		\draw (1) -- (2);
		\draw (2) -- (3);
        \draw (3) -- (4);

        \draw (0) -- (1');
        \draw (1') -- (2);
        \draw (2) -- (3');
        \draw (3') -- (4);
        
		\draw (3) -- (2');
		\draw (1) -- (2');
        \draw (3') -- (2');
        \draw (1') -- (2');
			
		\end{tikzpicture} 
        \vspace*{1mm}
		\caption{Subgraph $H$ induced by the $(4,4)$-$EMH$ cycle $\bar{x}_1 - \bar{x}_2 + \bar{x}_3 - \bar{x}_4 + \bar{x}_5 - \bar{x}_6 + \bar{x}_7 - \bar{x}8$, where $\bar{x}_1=(0,1,2,3,4)$, $\bar{x}_2=(0,1',2,3,4)$, $\bar{x}_3=(0,1',2',3,4)$, $\bar{x}_4=(0,1',2',3',4)$, $\bar{x}_5=(0,1',2,3',4)$, $\bar{x}_6=(0,1,2,3',4)$, $\bar{x}_7=(0,1,2',3',4)$, $\bar{x}_8=(0,1,2',3,4)$. In this case, the edges $(0,2)$, $(1,3)$ and $(2,4)$ cannot be present in order to have $\partial_{4,4}\bar{x}_i \neq 0$ but $\partial_{4,4}(\sum_i (-1)^i \bar{x_i}) = 0$.}
		\label{fig:subgraph emh max combination}
	\end{figure}
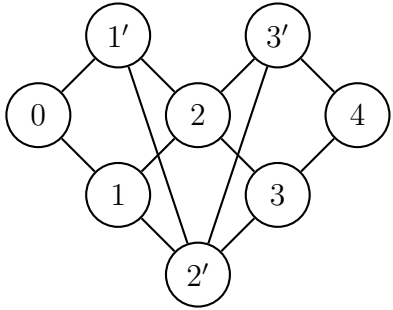

\begin{definition}
    \label{def:family of graphs induced by EMH cycles}
    Let $G=(V,E)$ be a graph.
    We call $\mathcal{H}=\{H_i\}$ the family of subgraphs of $G$ such that $H_i$ is induced by an $(k,k)$-eulerian magnitude homology cycle for every $i$.
    Notice that the minimal element of $\mathcal{H}$ is the graph in Figure \ref{fig:single-tuple-subgraph-emh}, while the maximal element is homomorphic to the grid graph of dimension $(k-2)\times (k-2)$ as in Figure \ref{fig:subgraph emh max combination}.
\end{definition}

The construction above implies the following result, which we will rely on to prove the complexity of computing eulerian magnitude homology.

\begin{theorem}
    \label{thm:cycle decomposition}
    Let $\{\bar{x}^i\}$ be a collection of tuples in $EMC_{k,k}(G)$.
    Then a linear combination $\sum_i a_i \bar{x}^i$ is an eulerian magnitude homology cycle, i.e. $\left[\sum_i a_i \bar{x}^i\right] \in EMH_{k,k}(G)$, if and only if the subgraph $H$ induced by the vertex set of $\sum_i a_i \bar{x}^i$ belongs to the family $\mathcal{H}$ defined in Definition \ref{def:family of graphs induced by EMH cycles}.
\end{theorem}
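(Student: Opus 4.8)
The plan is to prove both implications by reducing the vanishing of $\partial_{k,k}\!\left(\sum_i a_i \bar{x}^i\right)$ to a purely combinatorial condition on the supporting tuples, and then matching that condition against the explicit subgraph description built up in Lemma~\ref{lem:diff_zero_so_edge} and Example~\ref{ex:two_trail_cycle_graph}. The starting point is that every generator of $EMC_{k,k}(G)$ is a genuine path in $G$ (all landmarks distinct, and $d(x_j,x_{j+1})=1$ for each $j$ since the length is exactly $k$), so its boundary is the signed sum $\partial_{k,k}(\bar{x})=\sum_{j\in[k-1]}(-1)^j \partial^j_{k,k}(\bar{x})$ in which, by Lemma~\ref{lem:diff_zero_so_edge}, the term $\partial^j_{k,k}(\bar{x})$ is the face $(x_0,\dots,\hat{x}_j,\dots,x_k)$ when the shortcut edge $\{x_{j-1},x_{j+1}\}$ is absent from $G$, and is $0$ when that edge is present. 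Thus the condition $\partial_{k,k}\!\left(\sum_i a_i\bar{x}^i\right)=0$ says precisely that, after discarding the faces killed by shortcut edges, the remaining signed faces cancel identically in $EMC_{k-1,k}(G)$.

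First I would analyze the cancellation combinatorics. Two distinct generators $\bar{x}^{i},\bar{x}^{i'}$ can produce a common face only if one is obtained from the other by replacing a single landmark: if $\partial^j_{k,k}(\bar{x}^i)=\partial^{j'}_{k,k}(\bar{x}^{i'})$ as $(k-1)$-tuples, then necessarily $j=j'$ and $\bar{x}^i,\bar{x}^{i'}$ agree in every coordinate except position $j$, exactly as forced in Example~\ref{ex:two_trail_cycle_graph}. Consequently, within the support of a cycle, the relation ``differ in a single landmark and share the corresponding face'' organizes the tuples into a graph whose edges are labelled by the varying position $r\in[k-1]$. Tracking, for each position $r$, the set of distinct vertices that appear there across the support, and using that each non-killed face must be matched by an oppositely-signed identical face, I would show this incidence structure is a disjoint union of Boolean-lattice (hypercube) pieces: one factor for each position that genuinely varies, so that a connected indecomposable cycle is an alternating sum $\sum_{\epsilon}(-1)^{|\epsilon|}\bar{x}^{\epsilon}$ over all choices $\epsilon$ of ``original versus alternate'' landmark at the varying positions. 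Reading off the edges, the path edges of every $\bar{x}^{\epsilon}$ together with the shortcut edges forced present at the non-varying positions and forced absent at the varying positions reproduce exactly the generalized-grid subgraphs of Definition~\ref{def:family of graphs induced by EMH cycles}; this gives the forward implication, placing $H$ in $\mathcal{H}$.

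For the converse I would run the construction backwards. Given that the induced subgraph $H$ is a member of $\mathcal{H}$, it comes equipped with a distinguished set $R\subseteq[k-1]$ of varying positions, an original and an alternate vertex at each $r\in R$, all the grid edges joining them, and the guaranteed absence of the shortcut edges $\{x_{r-1},x_{r+1}\}$ for $r\in R$ (cf.\ Figures~\ref{fig:subgraph emh m-combination} and~\ref{fig:subgraph emh max combination}). I would form the candidate chain as the signed sum over the $2^{|R|}$ tuples obtained by independently selecting original or alternate at each $r\in R$, with sign the parity of the number of alternates, and verify that $\partial_{k,k}$ of it vanishes: at a position $r\in R$ the face is non-zero (the shortcut is absent), but it appears twice with opposite signs from the two tuples differing only at $r$, so it cancels; at a position outside $R$ the shortcut edge is present in $H$, so by Lemma~\ref{lem:diff_zero_so_edge} the face is already $0$. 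Hence the alternating sum is a cycle whose induced subgraph is $H$, which is the backward implication.

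The main obstacle is the rigidity argument inside the forward direction: showing that the only cancellation patterns compatible with $\partial_{k,k}\!\left(\sum_i a_i\bar{x}^i\right)=0$ are the hypercube ones, with the coefficients forced into the alternating $\pm1$ pattern. The delicate points are (i) ruling out ``long'' cancellation chains in which a single face is claimed by tuples differing at different positions, which I would exclude using that a shared face pins down one varying coordinate; (ii) proving that a connected component of the face-sharing graph is a \emph{full} Boolean lattice rather than a proper subposet, i.e.\ that if two positions vary then all four combinations must occur in order for every intermediate face to find its cancelling partner; and (iii) handling the coefficients, showing that once the tuple set is fixed the kernel condition determines $(a_i)$ up to an overall scalar on each component, which is what lets an arbitrary cycle decompose into the generalized-grid generators of $\mathcal{H}$.
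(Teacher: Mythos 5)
You attempt considerably more than the paper itself does: the paper offers no self-contained proof of Theorem~\ref{thm:cycle decomposition}, asserting only that ``the construction above implies'' it and noting that it is a reformulation of a theorem from the cited Giusti--Menara paper. Your backward direction is sound, and your key observation in the forward direction --- that a shared non-zero face pins down the position of the removed landmark, because the unique consecutive pair at distance $2$ in the face records where the deletion happened --- is correct and consistent with the mechanism the paper sets up in Lemma~\ref{lem:diff_zero_so_edge} and Example~\ref{ex:two_trail_cycle_graph}.

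The genuine gap is in the forward direction, and it is exactly the step you flag as ``the main obstacle'' and then do not carry out. The claim that a connected component of the face-sharing structure must be a \emph{full} Boolean lattice with coefficients forced into the alternating $\pm 1$ pattern is too strong as stated. At a varying position $r$ nothing prevents three or more alternative landmarks $x_r, x_r', x_r'', \dots$ from occurring in the support of the cycle; the kernel condition only forces the signed coefficients of all tuples sharing a given non-zero face to sum to zero, which admits supports that are not products of two-element sets and coefficient vectors that are not $\pm 1$. Consequently the decomposition into ``generalized-grid'' generators, and hence the identification of the induced subgraph with a member of the explicitly described family (from the path-with-shortcuts of Figure~\ref{fig:single-tuple-subgraph-emh} up to the $(k-2)\times(k-2)$ grid of Figure~\ref{fig:subgraph emh max combination}), does not follow from the argument as written; your items (ii) and (iii) are precisely where a real proof would have to do work, and they are left as intentions. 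To be fair, Definition~\ref{def:family of graphs induced by EMH cycles} defines $\mathcal{H}$ circularly as the family of subgraphs induced by cycles, so the only non-tautological content of the theorem is the explicit description of $\mathcal{H}$ --- and that description is exactly what your unexecuted rigidity step would need to establish.
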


Theorem \ref{thm:cycle decomposition} (which we notice is a reformulation of \cite[Theorem 3.5]{giusti2024eulerian}) highlights the close relationship between computing the first diagonal of eulerian magnitude homology $EMH_{k,k}(G)$ and enumerating specific subgraphs $H_i$ of $G$.
More precisely, computing the rank of the $(k,k)$-eulerian magnitude homology group is equivalent to enumerating the subgraphs of $G$ isomorphic to graphs in the family $\mathcal{H}$ defined in Definition \ref{def:family of graphs induced by EMH cycles}.
In other words, our problem is equivalent to enumerating all isomorphisms from $\mathcal{H}$ to $G$, and in what follows we will use the connection to prove that computing $EMH_{k,k}(G)$ is $\#W[1]$-complete.  

\subsection{\texorpdfstring{$\#W[1]$}{W[1]}-completeness}

\emph{Parameterized complexity theory} provides a framework for a fine-grain complexity analysis of algorithmic problems that are intractable in general.
The core idea of the theory is \emph{fixed-parameter tractability}, which modifies the classical concept of polynomial time computability by allowing algorithms that run in exponential time, but only relative to a specific parameter of the problem that is typically small in practical applications.
A good example of this is database query evaluation: often, the query size $k$ is very small compared to the database size $n$.
Therefore, an algorithm that evaluates the query in $O(2^k\cdot n)$ time may be acceptable and even efficient.
In contrast, an algorithm with a $\Omega(n^k)$ evaluation time is generally not practical.
Fixed-parameter tractability hinges on this principle: a parameterized problem is considered fixed-parameter tractable if there exists a computable function $f$ and a constant $c$ such that the problem can be solved in $f(k)\cdot n^c$ time, where $n$ is the input size and $k$ is the parameter value.

To show that parameterized problems like the clique problem are not fixed-parameter tractable, a theory of \emph{parameterized intractability} has been developed~\cite{downey2012parameterized}.
This theory has led to the creation of a complex array of parameterized complexity classes.
Among these, the most significant are the $W[t]$ classes for $t \geq 1$, which together form the $W$-hierarchy.
It is believed that $W[1]$ contains the class FPT (fixed-parameter tractable problems) and that the $W$-hierarchy is strictly ordered.
Many natural parameterized problems belong to one of the classes within the $W$-hierarchy.
For instance, the parameterized clique problem is known to be complete for the class $W[1]$.

We recall now some concepts that are needed to define the $W$-hierarchy.

\begin{definition}
    A \emph{Boolean circuit} is a directed acyclic graph with the nodes labeled as follows:
    \begin{itemize}
        \item every node of in-degree $0$ is an input node
        \item every node with in-degree $1$ is a negation node ($\neg$)
        \item every node with in-degree $\geq 2$ is either an AND-node ($\land$) or an OR-node ($\lor$).
    \end{itemize}

Moreover, exactly one node with out-degree $0$ is also labeled the output node.
The \emph{depth} of the circuit is the maximum length of a directed path from an input node to the output node.
The \emph{weft} of the circuit is the maximum number of nodes with in-degree $\geq 3$ on a directed path from an input node to the output node.
\end{definition}

Given an assignment of Boolean values to the input gates, the circuit determines Boolean values at each node in the obvious way.
If the value of the output node is $1$ for an input assignment, we say that this assignment satisfies the circuit.
The weight of an assignment is its number of $1$s.

    \noindent\fbox{\parbox{\textwidth}{
    \begin{center}
        WEIGHTED CIRCUIT SATISFIABILITY (WCS)
        \\
    \end{center}
    
        Instance: A Boolean circuit $C$, an integer $k$
        
        Parameter: $k$

        Problem: Is there an assignment with weight $k$ that satisfies $C$?
        }}
    \\\\
It is known ~\cite{downey2012parameterized} that WCS has exponential complexity.

\begin{definition}
    The class of circuits $C_{t,d}$ contains the circuits with weft $\leq t$ and depth $\leq d$.
\end{definition}

For any class of circuits $\mathcal{C}$, we can define the following problem.

    \noindent\fbox{\parbox{\textwidth}{
    \begin{center}
        WCS($\mathcal{C}$)
        \\
    \end{center}
    
        Instance: A Boolean circuit $C \in \mathcal{C}$, an integer $k$
        
        Parameter: $k$

        Problem: Is there an assignment with weight $k$ that satisfies $C$?
        }}
    \\\\

\begin{definition}[$W$ and $\#W$-hierarchy]
Let $t \in \{1,2,\dots\}$.
A parameterized problem $\Pi$ is in the parameterized complexity class $W[t]$ if there exists a parameterized reduction from $\Pi$ to $WCS[C_{t,d}]$ for some constant $d\geq 1$.

Similarly, we define classes $\#W[t]$ for $t\geq 1$ is the class of all parameterized counting problems that are fixed-parameter reducible to $\#WCS[C_{t,d}]$ for some constant $d\geq 1$.
\end{definition}

\begin{theorem}[{\cite[Corollary 4]{chen2008understanding}}] 
\label{thm:StrEmb is W[1] complete}
Let $\mathcal{C}$ be a family of graphs.
Call StrEmb$(\mathcal{C})$ (strong embedding) the problem of asking whether a graph $C \in \mathcal{C}$ is isomorphic to an induced subgraph of a graph $B$, and call $\#-$StrEmb$(\mathcal{C})$ the counting analog. 
Then, if the graphs in $\mathcal{C}$ have bounded size, then StrEmb$(\mathcal{C}) \in \P$. Otherwise, $p-$StrEmb$(\mathcal{C})$ is complete for $W[1]$ under FPT many-one reductions.
The analogue holds for the counting problem $\#-$StrEmb$(\mathcal{C})$.
\end{theorem}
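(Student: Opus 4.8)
The plan is to prove the stated dichotomy in two regimes, and in the unbounded regime to obtain completeness by arguing membership and hardness separately. First I would dispose of the bounded case. If every graph in $\mathcal{C}$ has at most $c$ vertices, then only finitely many isomorphism types occur in $\mathcal{C}$. Given a host graph $B$ on $n$ vertices, one enumerates all $O(n^{c})$ vertex subsets of size at most $c$ and, for each, checks in constant time whether the induced subgraph is isomorphic to some member of $\mathcal{C}$. This runs in time $O(n^{c})$, so $\mathrm{StrEmb}(\mathcal{C}) \in \P$; for $\#\text{-}\mathrm{StrEmb}(\mathcal{C})$ one accumulates a counter over the same enumeration.

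Next, for membership in $W[1]$ when $\mathcal{C}$ is unbounded, I would reduce to $\mathrm{WCS}[C_{1,d}]$. Fix a pattern $C \in \mathcal{C}$ with $k = |V(C)|$ and encode a candidate induced embedding as a weight-$k$ assignment selecting one host vertex per pattern vertex. A bounded-depth circuit over these inputs can verify, by a conjunction of $O(k^{2})$ two-input gates, that the selected vertices are distinct and that every pattern edge and every pattern non-edge is respected in $B$, accepting exactly when the selection induces a copy of $C$. Since each verification gate has bounded fan-in and the conjunction contributes a single high-fan-in gate, the weft is $1$. The identical construction, read as an instance of $\#\mathrm{WCS}[C_{1,d}]$, counts satisfying weight-$k$ assignments and hence places $\#\text{-}\mathrm{StrEmb}(\mathcal{C})$ in $\#W[1]$.

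The heart of the argument is $W[1]$-hardness for arbitrary unbounded $\mathcal{C}$, by reduction from the $W[1]$-complete problem $\mathrm{CLIQUE}$ parameterized by clique size (and from its counting version $\#\text{-}\mathrm{CLIQUE}$ for the counting statement). The only hypothesis is that arbitrarily large patterns occur, so I would exploit this via Ramsey's theorem: given a target clique size $k$, choose $C \in \mathcal{C}$ large enough that $C$ contains an induced homogeneous set $S$ — either a clique or an independent set — of size $k$. One then builds a host graph from the instance $(G,k)$ by attaching to a copy of $G$ a fixed gadget realizing the remaining vertices $V(C)\setminus S$, with edges chosen so that any induced embedding of $C$ must send $S$ into $G$ and the rest onto the gadget, while forcing the image of $S$ to be a clique (if $S$ is a clique) or an independent set (if $S$ is independent) of size $k$. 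In the clique case this directly encodes $k$-cliques of $G$, and the independent-set case is handled symmetrically by passing to the complement of $G$.

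The step I expect to be the main obstacle is exactly this hardness construction. The Ramsey extraction only isolates a well-structured \emph{core} $S$; the real difficulty is designing the padding gadget for $V(C)\setminus S$ so that every induced embedding of $C$ decomposes rigidly into ``core into $G$'' plus ``rest onto gadget,'' controlling all cross edges between $S$ and its complement in $C$ without creating spurious embeddings. Making the gadget admit a \emph{unique} extension of each core embedding is what upgrades the decision reduction to a parsimonious one, which is precisely what transfers $\#W[1]$-hardness to $\#\text{-}\mathrm{StrEmb}(\mathcal{C})$ and thereby yields the counting analogue claimed in the statement.
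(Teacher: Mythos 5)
The paper does not actually prove this statement: it is quoted, with citation, from Chen, Thurley and Weyer, so there is no internal proof to compare your attempt against and it must be judged on its own merits. Your treatment of the bounded case is fine, and the $W[1]$/$\#W[1]$-membership argument via weft-$1$ circuits is the standard one and essentially correct (though the verifier is a single high-fan-in conjunction over $O(k^2 n^2)$ small gates, one per pair of pattern vertices and pair of host vertices, not ``$O(k^2)$ two-input gates''; the weft-$1$ conclusion is unaffected).

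The genuine gap is exactly the one you flag yourself, and it is not a removable technicality. Ramsey's theorem yields a set $S\subseteq V(C)$ of size $k$ that is a clique or an independent set \emph{inside} $C$, but gives no control over the edges between $S$ and $V(C)\setminus S$: distinct vertices of $S$ may have different neighbourhoods in the rest of the pattern, i.e.\ $S$ need not be a module of $C$. Your host graph makes every vertex of the $G$-part an interchangeable target for every vertex of $S$, so a single rigid gadget for $V(C)\setminus S$ cannot reproduce a non-uniform attachment pattern; as described, the reduction either admits no induced copy of $C$ at all or, if you over-connect, admits spurious ones. The repair used in the cited proof is to first establish hardness of a partitioned (colour-coded) variant of StrEmb, in which each $s_i\in S$ gets its own copy of $V(G)$ with cross-edges tailored to $N_C(s_i)\setminus S$, and then to reduce the coloured variant to the plain one --- a second step that is itself nontrivial. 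Finally, demanding a \emph{unique} gadget extension of each core embedding, so as to make the reduction parsimonious, is in general impossible (consider patterns $C$ with nontrivial automorphisms); the counting analogue is obtained in the literature by interpolation and inclusion--exclusion relating induced and non-induced embedding counts, not by parsimonious gadgets, so that part of your plan would also need to be replaced.
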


We thus have the following result.

\begin{theorem}
    Computing the rank of the $(k,k)$-eulerian magnitude homology group $EMH_{k,k}(G)$ is $\#W[1]$-complete under FPT many-one reductions.
\end{theorem}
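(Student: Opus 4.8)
The plan is to prove both directions of $\#W[1]$-completeness by routing through the induced-subgraph counting problem $\#\text{-StrEmb}(\mathcal{H})$ for the family $\mathcal{H}$ of Definition \ref{def:family of graphs induced by EMH cycles}, and then invoking the dichotomy of Theorem \ref{thm:StrEmb is W[1] complete}. The bridge between the homological quantity and subgraph counting is Theorem \ref{thm:cycle decomposition}. First I would record the simplification that makes the rank tractable to analyze: since here $\ell = k$, Lemma \ref{lem:LowerTriangular} gives $EMC_{k+1,k}(G) \cong 0$, so $\imm(\partial_{k+1,k}) = 0$ and therefore $\rank(EMH_{k,k}(G)) = \dim \ker(\partial_{k,k})$. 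Thus the number we must compute is exactly the dimension of the cycle space of the first-diagonal complex, and by Theorem \ref{thm:cycle decomposition} this space is spanned by cycles each supported on a copy of some member of $\mathcal{H}$ sitting inside $G$. This identification is what lets me pass between the homology rank and the number of induced copies of graphs in $\mathcal{H}$.

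For membership in $\#W[1]$, I would give an FPT reduction from computing $\rank(EMH_{k,k}(G))$ to $\#\text{-StrEmb}(\mathcal{H})$. The point is that, once $k$ is fixed, the family $\mathcal{H}_k$ of isomorphism types induced by $(k,k)$-cycles (ranging from the single-tuple graph of Figure \ref{fig:single-tuple-subgraph-emh} to the grid of Figure \ref{fig:subgraph emh max combination}) is finite and depends only on $k$, and the dimension of the cycle space is recovered from the induced-copy counts of these types together with multiplicities that depend on $k$ alone and not on $G$. Since each such count is an instance of $\#\text{-StrEmb}(\mathcal{H})$, which lies in $\#W[1]$ by Theorem \ref{thm:StrEmb is W[1] complete}, membership follows. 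For hardness I would reduce in the opposite direction: the graphs of $\mathcal{H}$ have unbounded size, because already the minimal elements of Figure \ref{fig:single-tuple-subgraph-emh} grow with $k$, so by Theorem \ref{thm:StrEmb is W[1] complete} the counting problem $\#\text{-StrEmb}(\mathcal{H})$ is $\#W[1]$-complete. It then suffices to encode an instance of $\#\text{-StrEmb}(\mathcal{H})$, i.e.\ a target $H \in \mathcal{H}_k$ and a host graph $B$, as a rank computation from which the number of induced copies of $H$ in $B$ can be read off.

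The main obstacle is precisely this last disentangling step, and it is where the argument must be most careful. The quantity $\rank(EMH_{k,k}(B))$ is not the count of a single isomorphism type but a fixed integer combination
\[
\rank(EMH_{k,k}(B)) \;=\; \sum_{H \in \mathcal{H}_k} c_H \cdot \#\{\text{induced copies of } H \text{ in } B\},
\]
where the coefficient $c_H$ records how many independent cycles each type supports. To turn this into a genuine many-one (rather than merely Turing) reduction, I would exploit the partial order on $\mathcal{H}$ noted in Definition \ref{def:family of graphs induced by EMH cycles} and run an inclusion–exclusion over it, arguing that the incidence relation between types is triangular with respect to this order and hence invertible with coefficients computable from $k$ alone, which isolates the contribution of the desired type $H$.

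The delicate verifications I expect to carry the weight of the proof are twofold: establishing that the cycles supported on distinct (and possibly overlapping) copies are linearly independent, so that the rank really is the above combination with no hidden cancellations, and checking that the inclusion–exclusion inversion is FPT-computable and compatible with the many-one format demanded by the statement. By contrast, the bookkeeping that $\mathcal{H}_k$ is finite and that the multiplicities $c_H$ are functions of $k$ only is routine and follows directly from the explicit constructions in Example \ref{ex:two_trail_cycle_graph} and the surrounding discussion.
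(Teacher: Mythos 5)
Your proposal follows essentially the same route as the paper: both identify computing $\rank(EMH_{k,k}(G))$ with counting induced copies of graphs in $\mathcal{H}$ via Theorem \ref{thm:cycle decomposition}, and then conclude $\#W[1]$-completeness from Theorem \ref{thm:StrEmb is W[1] complete} because the members of $\mathcal{H}$ have unbounded size. If anything you are more careful than the paper's own proof, which simply asserts the ``equivalence'' and does not address the two directions of the reduction, the many-one format, or the disentangling of the rank into individual induced-copy counts that you rightly flag as the delicate step.
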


\begin{proof}
    From Theorem \ref{thm:cycle decomposition} we know that computing the rank of the group $EMH_{k,k}(G)$ is equivalent to enumerating the isomorphisms from graphs in the family $\mathcal{H}$ defined in Definition \ref{def:family of graphs induced by EMH cycles} to induced subgraphs of $G$.
    We noticed that the graphs in the family $\mathcal{H}$ do not have bounded size, being the maximal element homomorphic to the grid graph of dimension $(k-2)\times (k-2)$.
    Therefore, using Theorem \ref{thm:StrEmb is W[1] complete} we can conclude that enumerating isomorphisms from graphs in $\mathcal{H}$ to induced subgraphs in $G$ is $\#W[1]$-complete under FPT many-one reductions, and thus so is the problem of computing the rank of the $(k,k)$-eulerian magnitude homology group.
\end{proof}
 
We determined that calculating EMH is a computationally challenging problem, and in the next section we propose an algorithm to deal with this task.

\section{First Diagonal Algorithm}
\label{sec:algorithm}
 
In this section we propose a method to compute eulerian magnitude chains $EMC_{k,k}(G)$ and $EMC_{k-1,k}(G)$, and the first diagonal eulerian magnitude homology groups of a graph $G$, $EMH_{k,k}(G)$.

\begin{algorithm}
    \begin{algorithmic}[1]
        \Function{FDA}{graph $G=(V,E)$, length of longest simple path $L$}
            \For{$v_0 \in V$}
                \State $V_0$ $\gets$ $V \setminus v_0$
                \For{$v_1$ adjacent to $v_0$}
                \label{for:start second loop}
                    \State $V_0$ $\gets$ $V_0 \setminus v_1$
                    \State $EMC_{1,1}(G)$.append$(v_0,v_1)$
                    \For{$v_2 \in V_0$ and $v_2$ adjacent to $v_1$}
                    \label{for:start third loop}
                        \State $V_0$ $\gets$ $V_0 \setminus v_2$
                        \State $EMC_{2,2}(G)$.append$(v_0,v_1,v_2)$
                        \If{$\len(v_0,v_2)=2$}
                            $EMC_{1,2}(G)$.append$(v_0,v_2)$ 
                        \EndIf
                        \For{$k$ in range$(2,L-1)$}
                        \label{for:start fourth loop}
                            \For{each $(k+1)$-tuple in $EMC_{k,k}(G)$}
                                \State $u$ $\gets$ $(k+1)$-tuple$[-1]$
                                \State $V_0$ $\gets$ $V_0 \setminus u$
                                \For{$w \in V_0$ and $w$ adjacent to $u$}
                                    \State $EMC_{k+1,k+1}(G)$.append($(k+1)$-tuple$+w$)
                                    \If{$EMC_{k-1,k}(G)$ is not empty}
                                        \For{$k$-tuple in $EMC_{k-1,k}(G)$}
                                            \State $EMC_{k,k+1}(G)$.append($k$-tuple$+w$)
                                        \EndFor
                                    \EndIf
                                    \If{distance between entry ($(k+1)$-tuple$+w$)[-3]\linespread{1.5} \algoIndent{7}
                                    and entry ($(k+1)$-tuple$+w$)[-1]) is $2$\algoIndent{7}}
                                        \State $EMC_{k,k+1}(G)$.append($(k+1)$-tuple$-u+w$)
                                    \EndIf
                                \EndFor
                            \EndFor
                        \EndFor
                        \label{for:end fourth loop}
                    \EndFor
                    \label{for:end third loop}
                \EndFor
                \label{for:end second loop}
            \EndFor
            
        \Return $EMC_{k-1,k}(G),EMC_{k,k}(G)$ for all $k \in [2,L]$\;
        \EndFunction
    \end{algorithmic}
    \caption{Algorithm to compute $EMC_{k,k}(G)$ and $EMC_{k-1,k}(G)$ for $2 \leq k \leq L$, where $L$ is the diameter of $G$.}
    \label{alg:alg chains LSP}
\end{algorithm}

The complexity of building the eulerian magnitude chains $EMC_{k,k}(G)$ highly depends on the density of the graph.
Indeed, to add the $i$-th vertex $v_i$ in the construction of a trail $(v_0,\dots,v_k) \in EMC_{k,k}(G)$ we are required by definition to verify that $v_i$ is different from the previous $i$ vertices, and this results in $k!$ checks for each trail in $EMC_{k,k}(G)$.

In order to overcome this problem we use a breadth-first-search-based approach.
Specifically, we choose a starting vertex $v_0$ and we build all possible $k$-paths starting at $v_0$ by performing BFS.
Then we repeat the procedure $n$ times (that is, we allow every vertex in $G$ to be the starting point $v_0$).
We call this algorithm First Diagonal Algorithm (FDA), and its pseudocode is shown in Algorithm \ref{alg:alg chains LSP}.

For what concerns eulerian magnitude homology groups $EMH_{k,k}(G)$, our approach is to build a sparse matrix with rows and column indexed with the elements of $EMC_{k,k}(G)$ and $EMC_{k-1,k}(G)$ respectively, and compute the kernel of such matrix.
The procedure is presented in Algorithm \ref{alg:alg homology}.

\begin{algorithm}
    \begin{algorithmic}[1]
        \Function{EMH}{$EMC_{k,k}$, $EMC_{k-1,k}$}

        \State RowVec $\gets$ [ ]
        \State ColVec $\gets$ [ ]
        \State Data $\gets$ [ ]

        \State index the columns with elements of $EMC_{k,k}$
        \For{ChainIndex $\in [\len(EMC_{k,k})-1]_0$}
            \State chain $\gets$ $EMC_{k,k}$[ChainIndex]
            \For{VtxIdx $\in [\len(\text{chain}) - 1]$}
                \If{removing any vertex does not change the length of a path\algoIndent{3}}
                    \If{the k-tuple with the vertex removed is part of $EMC_{k-1,k}$\algoIndent{4}}
                        \State RowVec.append($EMC_{k-1,k}$.index(chain), VtxIdx))))
                        \State ColVec.append(ChainIndex)
                        \State Data.append($-1^\text{VtxIdx}$)
                    \EndIf
                \EndIf
            \EndFor
        \EndFor

        \State ShapeMatrix $\gets$ ($\len(EMC_{k-1,k})$,$\len(EMC_{k,k})$))
        \State matrix $\gets$ SparseMatrix((Data, (RowVec, ColVec)), ShapeMatrix)
        \State betti $\gets$ dimension of kernel(matrix)

        \Return betti\;
        \EndFunction
    \end{algorithmic}
    \caption{Algorithm to compute Betti numbers $\beta_{k,k}$ for a chosen $k$.}
    \label{alg:alg homology}
\end{algorithm}

\subsection{Complexity of eulerian magnitude chain computation}
\label{subsec:algorithm complexity}

The complexity of the FDA presented Algorithm \ref{alg:alg chains LSP} highly depends on the connectivity of the graph $G$.
Indeed, apart for the first \emph{for loop} iterating on all the $n$ vertices, the other internal loops iterate on the neighbors of the considered vertex.

In our analysis we assume that the maximum degree of $G$ is $N_v$ (i.e. each vertex $v$ has at most $N_v$ neighbors).
Proceeding with this assumption it holds that
\sloppy
\begin{itemize}
    \item The second \emph{for loop}, lines \ref{for:start second loop}-\ref{for:end second loop}, (saving all edges $(v_0,v_1)$ in $EMC_{1,1}(G)$) performs at most $N_v$ iterations.
    \item The third \emph{for loop}, lines \ref{for:start third loop}-\ref{for:end third loop}, performs (at most) $(N_v)^2$ ``append'', $(N_v)^2$ ``if'' checks and $(N_v)^2$ more ``append'', for a total of $3(N_v)^2$ operations.
    \item The fourth \emph{for loop}, lines \ref{for:start fourth loop}-\ref{for:end fourth loop}, performs (at most) $(N_v)^3$ ``$EMC_{3,3}$.append'', $(N_v)^3$ ``$EMC_{2,3}$.append'', $(N_v)^3$ ``if'' checks and $(N_v)^3$ ``$EMC_{2,3}$.append'', for a total of (at most) $4(N_v)^3$ operations.
    \item In general, at the $m$-th step we perform at most $m \cdot (N_v)^{m-1}$ operations.
\end{itemize}

Therefore, an upper bound for the number of steps is $N_v + \sum_{i=3}^L i (N_v)^{i-1}$ where, we recall, $L$ is the diameter of the graph. 
Thus,
\[
N_v + \sum_{i=3}^L i (N_v)^{i-1} = N_v + \sum_{i=0}^L i (N_v)^{i-1} - 0 - 1 - 2N_v = (L-1)(N_v)^L - N_v,
\]

and we perform at most $n \cdot ((L-1)(N_v)^L - N_v) = \mathcal{O}(n (N_v)^L)$ operations, with $2\leq L \leq n$, where we achieve the lower bound $2$ if $G$ is a star graph and the upper bound $n$ if $G$ is a path graph.

\subsection{Discussion}
\label{subsec:discussion}

The FDA we just introduced does have exponential complexity in the worst-case scenario.
Indeed, the complexity depends on the diameter $L$ of the graph, which grows linearly with $n$ in the case, for example, of path graphs and cycle graphs.

Nevertheless, we present in this section some examples of graphs coming from real-world situations with a much smaller diameter, making FDA an effective tool for their analysis.

The concept of the ``small-world phenomenon'' describes a notable trend evident in various real-world graphs: the majority of vertex pairs are linked by paths significantly shorter than the overall size of the graph.
The \emph{diameter} of an undirected graph represents the longest shortest-path distance between any pair of vertices. It serves as a familiar measure indicating the ``small-world'' nature of the graph. In simpler terms, it gauges how swiftly one can traverse from one side of the graph to the opposite side.
The diameter is related to various processes, e.g. it is within a constant factor
of the memory complexity of the depth-first search algorithm. Also, it is a natural lower bound for the mixing time of any random walk~\cite{levin2017markov} and the broadcast time of the graph~\cite{hedetniemi1988survey}.
Mehrabian showed in 2017~\cite{mehrabian2017justifying} logarithmic upper bounds for the diameters of a variety of models, including the following well known ones: Erd\"{o}s-R\'{e}nyi random model~\cite{erdHos1960evolution}, forest fire model~\cite{leskovec2007graph}, copying model~\cite{kumar2000stochastic}, PageRank-based selection model~\cite{pandurangan2002using}, Aiello-Chung-Lu models~\cite{aiello2002random}, generalized linear preference model~\cite{bu2002distinguishing}, directed scale-free graphs~\cite{bollobas2003directed}, Cooper-Frieze model~\cite{cooper2003general}, and random unordered increasing k-trees~\cite{gao2009degree}.
This means that in each of these models, for every pair $(u, v)$ of vertices there exists a very short $(u, v)$-path connecting $u$ and $v$ whose length is
logarithmic in the number of vertices.

A similar result was proved in the 1990s concerning the structure of the World Wide Web.
Indeed, Albert, Jeong and Barab\'asi showed in~\cite{albert1999diameter} that the average shortest path $d$ between two documents (defined as the smallest number of URL links that must be followed to navigate from one document to the other) is $\langle d \rangle = 0.35 + 2.06 \log(N)$, where $N$ is the dimension of the network, indicating that the web forms a small-world network which characterizes social or biological systems.
Also, since for a given $N$, $d$ follows a gaussian distribution then $\langle d \rangle$ can be interpreted as the diameter of the web, a measure of the shortest distance between any two points in the system. 
Despite its huge size, estimated to be $N=8 \times 10^8$, this result indicates that the web is a highly connected graph with an average diameter of only $19$ links.

This implies, in particular, that the average diameters of many real-world models are logarithmic, which in turn signifies that in principle our algorithm FDA runs in $\mathcal{O}(n (N_v)^{\log(n)})$.
Although this means that FDA's complexity is in general super-polynomial, we point out two crucial facts:
\begin{enumerate}
    \item As noticed in Section \ref{sec:EMH computational cost}, computing EMH using the definition potentially results in exponential computational complexity.
    \item Many families of graphs have diameter $L$ much smaller than $\log(n)$. For example: complete graphs have $L=1$, star graphs have $L=2$, complete bipartite graphs have $L=2$, friendship graphs have $L=2$, $(k,\lambda,\mu)$-strongly regular graphs with $\mu > 0$ have $L=2$, and ``big enough" Cayley graph~\cite{erskine2018large} have diameter $L\leq 3$. 
\end{enumerate}

\sloppy
Therefore, even though FDA's worst case scenario is indeed super-polynomial, it does represent both a strong improvement of the definition and a feasible computational tool.

Consider the case of a real-world graph representing a Visual Social Network, \cite{boyd2007social}.
Over the past few years, there has been a tremendous surge in interest surrounding the analysis of Virtual Social Networks. This burgeoning field has attracted attention from a diverse array of disciplines including psychology, anthropology, sociology, economics, and statistics, transforming it into a truly interdisciplinary research domain.

By analyzing such networks, scientists aim to delve into the intricate web of relationships among individuals, groups, organizations, and other entities engaged in knowledge exchange across the digital landscape.
To achieve this, the following two methods are generally used \cite{d2010overview}:
\begin{enumerate}
    \item \emph{Socio-centric}: to examine sets of relationships between people that are regarded for analytical purposes as bounded social collectives, Figure \ref{fig:socio-centric}.
    \item \emph{Ego-centric}: to select focal individuals (egos), and identify the nodes they are connected to, Figure \ref{fig:ego-centric}.
\end{enumerate}

\begin{figure}[ht]
		\centering
        \begin{minipage}{0.45\linewidth}
        \hspace{1cm}
		\begin{tikzpicture}[node distance={15mm}, thick, main/.style = {draw, circle,minimum size=.75cm}]
            \node[main] (B) {$B$};
			\node[main] (A) [below of=B] {$A$};  
			\node[main] (C) [above right of=B] {$C$};
			\node[main] (D) [below right of=B] {$D$};
			\node[main] (E) [above right of=D] {$E$};
			\node[main] (F) [below of=E] {$F$};
			\node[main] (H) [below right of=E] {$H$};
            \node[main] (G) [above right of=H] {$G$};

			\draw (A) -- (B);
			\draw (B) -- (C);
			\draw (B) -- (D);
            \draw (A) -- (D);
			\draw (D) -- (E);
			\draw (C) -- (E);
            \draw (F) -- (E);
            \draw (E) -- (G);
            \draw (E) -- (H);
            \draw (G) -- (H);
            \path[every node/.style={font=\sffamily\small}]
			(D) edge [bend right=90] node {} (G);
			
		\end{tikzpicture} 
        \vspace*{-2mm}
		\caption{Socio-centric social network with individuals A, B, C, D, E, F, G, H. The diameter of this graph is $L=3$.}
		\label{fig:socio-centric}
        \end{minipage}
        \hfill
        \begin{minipage}{0.45\linewidth}
		\centering
		\begin{tikzpicture}[node distance={15mm}, thick, main/.style = {draw, circle,minimum size=.75cm}]
		\def \n {20}
        \def \N {8}
        \def \radius {2cm}
        \def \rd {1mm}
        \def \rer {4mm}

        \def \margin {8} 

        \node[main] at (360:0mm) (A) {$A$};
        \foreach \i [count=\ni from 0] in {B,C,D,E,F}{
        \node[main] at ({108-\ni*30}:\radius) (u\ni) {\i};
        \draw (A)--(u\ni);}

        \foreach \i in {3,...,11}{
            \node[circle] at ({-\i*20}:\radius) (aux) {\phantom{$G$}};
            \draw[dotted, shorten >=2mm, shorten <=2mm] (A)--(aux);}
		\end{tikzpicture} 
        \vspace*{-7mm}
		\caption{Ego-centric social network with individuals A, B, C, D, E, F, G. The diameter of this graph is $L=2$.}
		\label{fig:ego-centric}
        \end{minipage}
	\end{figure}

In the first case, one of the most important measures characterizing the network is the
density (or connectedness), which is the number of links in a network as a ratio of
the total possible links.
As the density grows, the diameter of the graph describing the relationships in the visual network will decrease, making the computation of eulerian magnitude homology more feasible.

If the second method is used, then we are focusing on the relations starting from one specific individual, which will produce a star graph.
Therefore, also in this case eulerian magnitude homology represents a viable tool.

\section{Conclusions and perspectives}

In this paper we investigated the computational cost of calculating the ranks of first diagonal eulerian magnitude homology groups of a graph $G$, $EMH_{k,k}(G)$.
We first infer that this problem is $\#W[1]$-complete by rewriting it as a subgraph isomorphism problem. 
Then, we produce FDA, a BFS-based algorithm which computes the eulerian magnitude chains $EMC_{k,k}(G)$ and $EMC_{k-1,k}(G)$ in super-polynomial time.

\medspace

The tools developed in this paper suggest a number of avenues for further work.
Here we highlight the ones that the authors find particularly interesting.

\begin{enumerate}
    \item In order to base our algorithm on a breadth-first-search, we heavily leverage equality between the length $\ell$ and number of landmarks $k$. We believe that this method can be leveraged to iteratively study the subgraphs inherent in the $\ell = k+i$ lines for increasing $i$, providing more insight into the graph substructures.
    \item The one presented here is the first algorithm computing eulerian magnitude homology in non-exponential time, thus representing a high improvement of definition's computational complexity. Nevertheless, the authors believe the present algorithm's efficiency can be further improved with techniques similar to the ones used in the case of Persistent Homology~\cite{otter2017roadmap}. This would initiate a new current in the field of Topological Data Analysis based on magnitude homology and devoted to network analysis. 
\end{enumerate}

\subsection*{Acknowledgments}
The authors are thankful to Giulia Bernardini for useful discussions throughout the development of this work. 

\bibliographystyle{amsplain}
\bibliography{sn-bibliography.bib}

\providecommand{\bysame}{\leavevmode\hbox to3em{\hrulefill}\thinspace}
\providecommand{\MR}{\relax\ifhmode\unskip\space\fi MR }
\providecommand{\MRhref}[2]{%
  \href{http://www.ams.org/mathscinet-getitem?mr=#1}{#2}
}
\providecommand{\href}[2]{#2}
\begin{thebibliography}{10}

\bibitem{aiello2002random}
William Aiello, Fan Chung, and Linyuan Lu, \emph{Random evolution in massive
  graphs}, Handbook of massive data sets (2002), 97--122.

\bibitem{albert1999diameter}
R{\'e}ka Albert, Hawoong Jeong, and Albert-L{\'a}szl{\'o} Barab{\'a}si,
  \emph{Diameter of the world-wide web}, Nature \textbf{401} (1999), no.~6749,
  130--131.

\bibitem{amini2012counting}
Omid Amini, Fedor~V Fomin, and Saket Saurabh, \emph{Counting subgraphs via
  homomorphisms}, SIAM Journal on Discrete Mathematics \textbf{26} (2012),
  no.~2, 695--717.

\bibitem{angel2014dense}
Albert Angel, Nick Koudas, Nikos Sarkas, Divesh Srivastava, Michael Svendsen,
  and Srikanta Tirthapura, \emph{Dense subgraph maintenance under streaming
  edge weight updates for real-time story identification}, The VLDB journal
  \textbf{23} (2014), no.~2, 175--199.

\bibitem{belov2021graph}
Dmitry~I Belov and James~A Wollack, \emph{Graph theory approach to detect
  examinees involved in test collusion}, Applied Psychological Measurement
  \textbf{45} (2021), no.~4, 253--267.

\bibitem{bollobas2003directed}
B{\'e}la Bollob{\'a}s, Christian Borgs, Jennifer~T Chayes, and Oliver Riordan,
  \emph{Directed scale-free graphs.}, SODA, vol.~3, Baltimore, MD, United
  States, 2003, pp.~132--139.

\bibitem{boyd2007social}
Danah~M Boyd and Nicole~B Ellison, \emph{Social network sites: Definition,
  history, and scholarship}, Journal of computer-mediated Communication
  \textbf{13} (2007), no.~1, 210--230.

\bibitem{bu2002distinguishing}
Tian Bu and Don Towsley, \emph{On distinguishing between internet power law
  topology generators}, Proceedings. twenty-first annual joint conference of
  the ieee computer and communications societies, vol.~2, IEEE, 2002,
  pp.~638--647.

\bibitem{chen2008understanding}
Yijia Chen, Marc Thurley, and Mark Weyer, \emph{Understanding the complexity of
  induced subgraph isomorphisms}, Automata, Languages and Programming: 35th
  International Colloquium, ICALP 2008, Reykjavik, Iceland, July 7-11, 2008,
  Proceedings, Part I 35, Springer, 2008, pp.~587--596.

\bibitem{cooper2003general}
Colin Cooper and Alan Frieze, \emph{A general model of web graphs}, Random
  Structures \& Algorithms \textbf{22} (2003), no.~3, 311--335.

\bibitem{d2010overview}
Alessia D’Andrea, Fernando Ferri, and Patrizia Grifoni, \emph{An overview of
  methods for virtual social networks analysis}, Springer, 2010.

\bibitem{downey2012parameterized}
Rodney~G Downey and Michael~Ralph Fellows, \emph{Parameterized complexity},
  Springer Science \& Business Media, 2012.

\bibitem{dyer2000complexity}
Martin Dyer and Catherine Greenhill, \emph{The complexity of counting graph
  homomorphisms}, Random Structures \& Algorithms \textbf{17} (2000), no.~3-4,
  260--289.

\bibitem{erdHos1960evolution}
Paul Erd{\H{o}}s, Alfr{\'e}d R{\'e}nyi, et~al., \emph{On the evolution of
  random graphs}, Publ. math. inst. hung. acad. sci \textbf{5} (1960), no.~1,
  17--60.

\bibitem{erskine2018large}
Grahame Erskine and James Tuite, \emph{Large cayley graphs of small diameter},
  Discrete Applied Mathematics \textbf{250} (2018), 202--214.

\bibitem{gao2009degree}
Yong Gao, \emph{The degree distribution of random k-trees}, Theoretical
  Computer Science \textbf{410} (2009), no.~8-10, 688--695.

\bibitem{giusti2024eulerian}
Chad Giusti and Giuliamaria Menara, \emph{Eulerian magnitude homology: subgraph
  structure and random graphs}, arXiv preprint arXiv:2403.09248 (2024), 1--37.

\bibitem{gu2018graph}
Yuzhou Gu, \emph{Graph magnitude homology via algebraic morse theory},
  Homology, Homotopy and Applications \textbf{23} (2021), no.~1, 297--310.

\bibitem{han2007identifying}
Kyungsook Han, Guangyu Cui, and Yu~Chen, \emph{Identifying functional groups by
  finding cliques and near-cliques in protein interaction networks}, 2007
  Frontiers in the Convergence of Bioscience and Information Technologies,
  IEEE, 2007, pp.~159--164.

\bibitem{hatcher2005algebraic}
Allen Hatcher, \emph{Algebraic topology}, Cambridge University Press, 2005.

\bibitem{hedetniemi1988survey}
Sandra~M Hedetniemi, Stephen~T Hedetniemi, and Arthur~L Liestman, \emph{A
  survey of gossiping and broadcasting in communication networks}, Networks
  \textbf{18} (1988), no.~4, 319--349.

\bibitem{hell1990complexity}
Pavol Hell and Jaroslav Ne{\v{s}}et{\v{r}}il, \emph{On the complexity of
  h-coloring}, Journal of Combinatorial Theory, Series B \textbf{48} (1990),
  no.~1, 92--110.

\bibitem{hepworth2022magnitude}
Richard Hepworth, \emph{Magnitude cohomology}, Mathematische Zeitschrift
  (2022), 1--24.

\bibitem{hepworth2015categorifying}
Richard Hepworth and Simon Willerton, \emph{Categorifying the magnitude of a
  graph}, Homology, Homotopy and Applications \textbf{19} (2017), no.~2,
  31--60.

\bibitem{kaneta2021magnitude}
Ryuki Kaneta and Masahiko Yoshinaga, \emph{Magnitude homology of metric spaces
  and order complexes}, Bulletin of the London Mathematical Society \textbf{53}
  (2021), no.~3, 893--905.

\bibitem{kumar2000stochastic}
Ravi Kumar, Prabhakar Raghavan, Sridhar Rajagopalan, D~Sivakumar, Andrew
  Tomkins, and Eli Upfal, \emph{Stochastic models for the web graph},
  Proceedings 41st Annual Symposium on Foundations of Computer Science, IEEE,
  2000, pp.~57--65.

\bibitem{kumar1999trawling}
Ravi Kumar, Prabhakar Raghavan, Sridhar Rajagopalan, and Andrew Tomkins,
  \emph{Trawling the web for emerging cyber-communities}, Computer networks
  \textbf{31} (1999), no.~11-16, 1481--1493.

\bibitem{leinster2013magnitude}
Tom Leinster, \emph{The magnitude of metric spaces}, Documenta Mathematica
  \textbf{18} (2013), 857–905.

\bibitem{leinster2019magnitude}
\bysame, \emph{The magnitude of a graph}, Mathematical Proceedings of the
  Cambridge Philosophical Society, vol. 166, Cambridge University Press, 2019,
  pp.~247--264.

\bibitem{leinster2012measuring}
Tom Leinster and Christina~A Cobbold, \emph{Measuring diversity: the importance
  of species similarity}, Ecology \textbf{93} (2012), no.~3, 477--489.

\bibitem{leskovec2007graph}
Jure Leskovec, Jon Kleinberg, and Christos Faloutsos, \emph{Graph evolution:
  Densification and shrinking diameters}, ACM transactions on Knowledge
  Discovery from Data (TKDD) \textbf{1} (2007), no.~1, 2--es.

\bibitem{levin2017markov}
David~A Levin and Yuval Peres, \emph{Markov chains and mixing times}, vol. 107,
  American Mathematical Soc., 2017.

\bibitem{mehrabian2017justifying}
Abbas Mehrabian, \emph{Justifying the small-world phenomenon via random
  recursive trees}, Random Structures \& Algorithms \textbf{50} (2017), no.~2,
  201--224.

\bibitem{otter2017roadmap}
Nina Otter, Mason~A Porter, Ulrike Tillmann, Peter Grindrod, and Heather~A
  Harrington, \emph{A roadmap for the computation of persistent homology}, EPJ
  Data Science \textbf{6} (2017), 1--38.

\bibitem{pandurangan2002using}
Gopal Pandurangan, Prabhakar Raghavan, and Eli Upfal, \emph{Using pagerank to
  characterize web structure}, International computing and combinatorics
  conference, Springer, 2002, pp.~330--339.

\bibitem{sazdanovic2021torsion}
Radmila Sazdanovic and Victor Summers, \emph{Torsion in the magnitude homology
  of graphs}, Journal of Homotopy and Related Structures \textbf{16} (2021),
  no.~2, 275--296.

\bibitem{sozio2010community}
Mauro Sozio and Aristides Gionis, \emph{The community-search problem and how to
  plan a successful cocktail party}, Proceedings of the 16th ACM SIGKDD
  international conference on Knowledge discovery and data mining, 2010,
  pp.~939--948.

\bibitem{yin2017local}
Hao Yin, Austin~R Benson, Jure Leskovec, and David~F Gleich, \emph{Local
  higher-order graph clustering}, Proceedings of the 23rd ACM SIGKDD
  international conference on knowledge discovery and data mining, 2017,
  pp.~555--564.

\bibitem{zhang2012extracting}
Yang Zhang and Srinivasan Parthasarathy, \emph{Extracting analyzing and
  visualizing triangle k-core motifs within networks}, 2012 IEEE 28th
  international conference on data engineering, IEEE, 2012, pp.~1049--1060.

\bibitem{zhao2012large}
Feng Zhao and Anthony~KH Tung, \emph{Large scale cohesive subgraphs discovery
  for social network visual analysis}, Proceedings of the VLDB Endowment
  \textbf{6} (2012), no.~2, 85--96.

\bibitem{zhu2020community}
Jinrong Zhu, Bilian Chen, and Yifeng Zeng, \emph{Community detection based on
  modularity and k-plexes}, Information Sciences \textbf{513} (2020), 127--142.

\end{thebibliography}

\end{document}